\newcommand{\tuple}[1]{\langle{#1}\rangle}
\newcommand{\EL}{\ensuremath{\mathscr{E\!L}}}
\newcommand{\ELPP}{\ensuremath{\EL^{++}}}
\newcommand{\ALC}{\ensuremath{\mathscr{A\!L\!C}}}
\newcommand{\dland}{\ensuremath{\sqcap}}
\newcommand{\dlto}{\ensuremath{\sqsubseteq}}
\newcommand{\sfl}{\sffamily\slshape}
\renewcommand{\phi}{\varphi}
\renewcommand{\emptyset}{\varnothing}
\definecolor{darkblue}{rgb}{0,0,.7}
\definecolor{purple}{rgb}{.65,0,.65}
\definecolor{mygray}{RGB}{150,150,150}
\title{Extending Description Logic EL++ with Linear Constraints on the Probability of Axioms%
	\thanks{This study was financed in part by the Coordena\c{c}\~{a}o de Aperfei\c{c}oamento de Pessoal de N\'{i}vel Superior -- Brasil (CAPES) -- Finance Code 001.}
}
\author{Marcelo Finger\thanks{Partly supported by Fapesp projects 2015/21880-4 and 2014/12236-1 and CNPq grant PQ 303609/2018-4.}
		\\Department of Computer Science\\
		University of S\~{a}o Paulo, Brazil.\\
		\url{mfinger@ime.usp.br}
}
\date{}
\begin{document}
\maketitle

\begin{abstract}
	One of the main reasons to employ a description logic such as EL or EL++ is the fact that it has efficient, polynomial-time algorithmic properties such as deciding consistency and inferring subsumption. However, simply by adding negation of concepts to it, we obtain the expressivity of description logics whose decision procedure is {ExpTime}-complete. Similar complexity explosion occurs if we add probability assignments on concepts. To lower the resulting complexity, we instead concentrate on assigning probabilities to Axioms (GCIs). We show that the consistency detection problem for such a probabilistic description logic is NP-complete, and present a linear algebraic deterministic algorithm to solve it, using the column generation technique. We also examine and provide algorithms for the probabilistic extension problem, which consists of inferring the minimum and maximum probabilities for a new axiom, given a consistent probabilistic knowledge base. 
	
	An earlier version of this work has appeared as~\cite{Fin2019b1}.  Here we detail the column generation method and present a detailed example. 
	
\end{abstract}

\section{Introduction}
\label{sec:intro}

The logic \ELPP\ is one of the most expressive description logics in which the complexity of inferential reasoning  is tractable~\cite{BBL2005}.   A direct consequence of this expressivity is that, by adding extra features to this language, its complexity easily grows exponentially.  By inferential complexity we mean the complexity of decision problems such as consistency  detection, finding a model that satisfies a set of constraints, or Axiom subsumption. All such problems are tractable in \ELPP.

In this work we are interested in adding probabilistic reasoning capabilities to \ELPP;  however, depending on how those reasoning capabilities are added to the language, the inferential complexity can explode beyond exponential time. As shown in Section~\ref{sec:pconcept}, by extending \ELPP\ with probabilistic constraints over concepts, inferential reasoning becomes \textsc{ExpTime}-hard.  Such an approach was employed in many times in the literature, either by  enhancing expressive description logics such as \ALC~\cite{Hei1994,Luk2008,G-BJLS2011,JGLS2011},  or by adding probabilistic capabilities to the family of \EL-like logics~\cite{LS2010,BJLS2017}.

In this work, we study a different way of extending description logics with probabilistic reasoning capabilities, namely by applying probabilities to GCI Axioms.  One of our goals is to reduce the complexity of probabilistic reasoning in description logics. Another goal is to deal with the modelling situation in which a GCI Axiom is not always true, but one can assign (subjectively) a probability to  its validity. Consider the following example describing one such situation.  

\begin{example}\label{ex:dengue0}\rm	
	Consider the following medical situation, in which a patient may have symptoms which are caused buy a disease. However, some diseases cause only very nonspecific symptoms,  such as high fever, skin rash and joint pain, which may also be caused by several other diseases.
	Dengue is one such desease with mostly nonspecific symptoms.  Dengue is a mosquito-borne viral disease and more than half of the world population lives at risk of contracting it.  Among its symptoms are high fever, joint pains and skin eruptions (rash).  These symptoms are common but not all patients present all symptoms.  Such an uncertain  situation allows for probabilistic modelling. 
	
	In a certain hospital, joint pains are caused by dengue in 20\% of the cases; in the remaining 80\% of the cases, there is a patient whose symptoms include joint pains whose cause is \emph{not} attributable to dengue.   Also, a patient having high fever has some probability having dengue, which increases 5\% if the patient also has a rash. If those probabilistic constraints are satisfiable, one can also ask the minimum and maximum probability that a given patient is a suspect of suffering from dengue.
\end{example}

By adding probability constraints to axioms, we hope to model such a situation.  Furthermore we will show that the inferential complexity in this case remains ``only’’ NP-complete.   In fact, our approach extends some previous results which considered adding probabilistic capabilities only to ABox statements~\cite{FWC2011}.  By using \ELPP\ as the underlying formalism, ABox statements can be formulated as a particular case of GCI Axioms, so the approach here has that of~\cite{FWC2011} as a particular case, but with inferential reasoning remaining in the same complexity class.

The rest of the paper proceeds as follows.  Section~\ref{sec:prelim} presents the formal \ELPP-framework and Section~\ref{sec:pgci} introduces probabilities over axioms, and define the probabilistic satisfiability and  probabilistic extension problems.  Section~\ref{sec:linalg} presents an algorithm for probabilistic satisfiability that combines \ELPP-solving with linear algebraic methods, such as column generation.  Finally, Section~\ref{sec:ext} presents an algorithm for the probabilistic extension problem, and then we present our conclusions in Section~\ref{sec:conc}.

\section{Preliminaries}
\label{sec:prelim}

We concentrate on the description language \ELPP\ without concrete domains~\cite{BBL2005}. We start with a signature consisting of a triple of countable sets $\mathsf{N}=\tuple{\mathsf{N_C,N_R,N_I}}$ where $\mathsf{N_C}$ is a set of \emph{concept names}, $\mathsf{N_R}$ is a set of \emph{role names} and $\mathsf{N_I}$ is a set of \emph{individual names}.  The basic \emph{concept description} are recursively defined as follows:
\begin{itemize}
	\item $\top$, $\bot$ and concept names in $\mathsf{N_C}$ are (simple) concept descriptions;
	\item if $C,D$ are concept descriptions, $C \dland D$ is a (conjunctive) concept description;
	\item if $C$ is a concept description and $r \in \mathsf{N_R}$, $\exists r.C$ is an (existential) concept description;
	\item if $a \in \mathsf{N_I}$, $\{a\}$ is a (nominal) concept description;
\end{itemize}

If $C, D$ are concept descriptions an \emph{axiom}, also called a \emph{general concept inclusion} (GCI), is an expression of the form $C \dlto D$. If $r, r_1, \ldots, r_k \in \mathsf{N_R}$ then $r_1 \circ \cdots \circ r_k \dlto r$ is a \emph{role inclusion} (RI). A finite set of axioms is called a \emph{TBox} and a finite set of axioms and RIs is called a \emph{constraint box} (CBox).

A \emph{concept assertion} is an expression of the form $C(a)$, where $a \in \mathsf{N_I}$ and $C$ is a concept description; a \emph{role assertion} is an expression of the form $r(a,b)$, where $a,b \in \mathsf{N_I}$ and $r \in \mathsf{N_R}$.  A finite set of concept and role assertions forms an assertion box (\emph{ABox}).

Semantically, we consider an \emph{interpretation} $\mathcal{I} = \tuple{\Delta^\mathcal{I}, \cdot^\mathcal{I}}$. The domain $\Delta^\mathcal{I}$ is a non-empty set of individuals and the interpretation function $\cdot^\mathcal{I}$ maps each concept name $A \in \mathsf{N_C}$ to a subset $A^\mathcal{I} \subseteq \Delta^\mathcal{I}$, each role name $r \in \mathsf{N_R}$ to a binary relation $r^\mathcal{I} \subseteq \Delta^\mathcal{I} \times \Delta^\mathcal{I}$ and each individual name $a \in \mathsf{N_I}$ to an individual $a^\mathcal{I} \in \Delta^\mathcal{I}$. The extension of $\cdot^\mathcal{I}$ to arbitrary concept descriptions is inductively defined as follows.

\begin{itemize}
	\item $\top^\mathcal{I} = \Delta^\mathcal{I}$, $\bot^\mathcal{I} = \emptyset$;
	\item $(C \dland D)^\mathcal{I} = C^\mathcal{I} \cap D^\mathcal{I}$;
	\item $(\exists r.C)^\mathcal{I} = \{x \in \Delta^\mathcal{I} | \exists y \in C^\mathcal{I}, \tuple{x,y} \in r^\mathcal{I} \}$;
	\item $(\{a\})^\mathcal{I} = \{a^\mathcal{I}\}$.
\end{itemize}

The interpretation $\mathcal{I}$ \emph{satisfies} an axiom $C \dlto D$ if $C^\mathcal{I} \subseteq D^\mathcal{I}$ (represented as $\mathcal{I} \models C \dlto D$); the RI  $r_1 \circ \cdots \circ r_k \dlto r$ is satisfied by  $\mathcal{I}$ (represented as $\mathcal{I} \models r_1 \circ \cdots \circ r_k \dlto r$) if  $r_1^\mathcal{I} \circ \cdots \circ r_k^\mathcal{I} \subseteq r^\mathcal{I}$. A model $\mathcal{I}$ satisfies the assertion  $C(a)$ (represented as $\mathcal{I} \models C(a)$) if $a^\mathcal{I} \in C^\mathcal{I}$ and satisfies the assertion $r(a,b)$ (represented as $\mathcal{I} \models r(a,b)$) if $\tuple{a^\mathcal{I},b^\mathcal{I}} \in r^\mathcal{I}$.
Given a CBox $\mathcal{C}$, we write $\mathcal{I} \models \mathcal{C}$ if $\mathcal{I} \models C \dlto D$ for every axiom $C \dlto D \in \mathcal{C}$ and $\mathcal{I} \models r_1 \circ \cdots \circ r_k \dlto r$ for every role inclusion in $\mathcal{C}$ . Similarly, given an ABox $\mathcal{A}$, we write $\mathcal{I} \models \mathcal{A}$ if $\mathcal{I}$ satisfies all its assertions.

Given a CBox $\mathcal{C}$, we say that it  \emph{logically entails} an axiom $C \dlto D$, represented as $\mathcal{C} \models C \dlto D$, if for every interpretation $\mathcal{I} \models \mathcal{C}$ we have that $\mathcal{I} \models  C \dlto D$.

Note that in \ELPP\ there is no need for an explicit ABox, for we have that  $\mathcal{I} \models  C(a)$ iff $\mathcal{I} \models  \{a\} \dlto C$; and $\mathcal{I} \models  r(a,b)$ iff $\mathcal{I} \models  \{a\} \dlto \exists r.\{b\}$.

Given a CBox, one of the important problems for \ELPP\ is to determine its \textit{consistency}, namely the existence of a common model which jointly validates all expressions in the CBox. There is a polynomial algorithm which decides \ELPP-consistency~\cite{BBL2005a}. 

This decision process can be used to provide a PTIME \emph{classification} of an \EL\ CBox. Given a CBox $\mathcal{C}$, the set $ \mathsf{BC}_\mathcal{C}$ of \emph{basic concepts descriptions for $\mathcal{C}$} is given by 

\[ \mathsf{BC}_\mathcal{C} = \{\top,\bot\} \cup \left\{\frac{}{} C \in \mathsf{N_C} | C  \textrm{ used in } \mathcal{C}\right\} \cup \left\{\frac{}{}\{a_i\} | a_i \in \mathsf{N_I} \textrm{ used in } \mathcal{C}\right\} . \]


\begin{example}\label{ex:dengue1}\rm
	Consider a CBox representing the situation described in Example~\ref{ex:dengue0}; this modelling is adapted from~\cite{FWC2011}.
	
	\smallskip
	\begin{tabular}{l|l}
		\begin{minipage}[t]{0.4\linewidth}
			\noindent
			The following TBox $\mathcal{T}_0$ describes basic knowledge on deseases:
			
			\smallskip
			
			{\scriptsize\sfl
				High-fever $\sqsubseteq$ Symptom
				
				Joint-pain $\sqsubseteq$ Symptom
				
				Rash $\dlto$ Symptom
				
				Dengue $\sqsubseteq$ Disease
				
				Symptom $\sqsubseteq$ $\exists$hasCause.Disease
				
				Patient $\sqsubseteq$ $\exists$suspectOf.Disease
				
				Patient $\sqsubseteq$ $\exists$hasSymptom.Symptom
				
				$\exists$hasSymptom.($\exists$hasCause.Dengue)$\sqsubseteq$ \\
				\hspace*{9em}$\exists$suspectOf.Dengue
				
			}  
		\end{minipage}
		~
		&
		~
		\begin{minipage}[t]{0.5\linewidth}
			\noindent
			And the following ABox presents John's symptoms.
			
			\smallskip
			
			{\scriptsize\sfl\hspace*{-1em}
				\begin{tabular}{ll@{$]$}}
				Patient(john) & $[\equiv\{$john$\} \dlto$ Patient\\
				
				High-fever($s_1$) & $[\equiv\{s_1\} \dlto$ High-fever\\
				
				hasSymptom(john, $s_1$) & $[\equiv\{$john$\} \dlto \exists$hasSymptom.$\{s_1\}$\\
				
				
				
				Joint-pain($s_2$) & $[\equiv\{s_2\} \dlto$ Joint-pain\\

				hasSymptom(john, $s_2$) & $[\equiv\{$john$\} \dlto \exists$hasSymptom.$\{s_2\}$\\
				\end{tabular}
			}
		\end{minipage}
	\end{tabular}
	\smallskip

	\noindent
	Note that the uncertain information on dengue and its symptoms is not represented by the CBox above.
	\end{example}

\section{Extending \ELPP\ with Probabilistic Constraints}
\label{sec:pgci}

One of the main reasons to employ a description logic such as \ELPP\  is the fact that it has polynomial-time algorithmic properties such as deciding and inferring subsumption.    However, it is well known that  simply by adding negation of concepts to \ELPP,  we obtain the  expressivity of  description logic \ALC\,  whose decision procedure  is \textsc{ExpTime}-complete~\cite{BHLS2017}.  This complexity blow up can also be expected when adding probabilistic constraints. 

\subsection{Why Not Assign Probability to Concepts?}
\label{sec:pconcept}

When we are dealing with probabilistic constraints on description logic, one of the first ideas is to apply conditional or unconditional probability constraints to concepts.   In fact,  such an approach was employed in several enhancements of description logics with   probabilistic reasoning capabilities,  e.g. as~\cite{Hei1994,Luk2008,LS2010,BJLS2017}.

However, one can see how such an approach would lead to problems if applied to \ELPP. For each  concept $C$   one can  define  an associated concept $\bar{C}$  subject to the following constraints:

\begin{align*}
P(C) + P(\bar{C}) &= 1\\
P(C \dland \bar{C}) &= 0
\end{align*}

Without going into the (non-trivial) semantic details of concept probabilities,  it is intuitively clear that those statements force $\bar{C}$  to be the negation of $C$.  In fact,  the first statement expresses that $C$  and $\bar{C}$  are complementary  and the second statement  expresses that they are  disjoint;  together they mean that  interpretation of $C$  and $\bar{C}$  form a partition of the domain, and thus $\bar{C}$ is the negation of $C$.  As a consequence, the expressivity provided by probabilities over concepts  adds to \ELPP\ the expressivity  of \ALC, and as a consequence the complexity of deciding axiom  subsumption becomes \textsc{ExpTime}-hard.   Detailed complexity analysis can be found in~\cite{BJLS2017}. 

To lower the resulting complexity, we refrain from  assigning probabilities to concepts and instead concentrate on assigning probabilities to axioms.

\subsection{Probability Constraints over Axioms}
\label{sec:probax}

Assume there is a finite number of interpretations, $\mathcal{I}_1, \ldots, \mathcal{I}_m$; let $P$  be a mapping that attributes to each $\mathcal{I}_i$ a positive value $P(\mathcal{I}_i) \geq 0$ such that  $\sum_{i=1}^{m} P(\mathcal{I}_i) = 1$. 

Then given an axiom $C \dlto D$, its probability is given by:
\begin{align}
	P(C \dlto D) = \sum_{\mathcal{I}_i \models C \dlto D} P(\mathcal{I}_i)~~ .	
\end{align}
Note that this definition contemplates the probability of ABox elements; for example the probability  $P(C(a)) =P(\{a\} \dlto C)$.

Given axioms $C_1 \dlto D_1, \ldots, C_\ell \dlto D_\ell$ and rational numbers $b_1, \ldots, b_\ell;q$, a \emph{probabilistic constraint} consist of  the linear combination:
\begin{align} \label{eq:ctrnt}
	b_1 \cdot P(C_1 \dlto D_1) + \cdots b_\ell \cdot P(C_\ell \dlto D_\ell) \bowtie q ~~,
\end{align}
where $\bowtie\ \in \{\leq, \geq,=\}$. A \emph{PBox} is a set of probabilistic constraints.  A \emph{probabilistic knowledge base} is a pair $\tuple{\mathcal{C,P}}$, where $\mathcal{C}$ is a CBox and $\mathcal{P}$ a PBox. Note that the axioms occurring in the PBox need not occur in the CBox, and in general they do not occur in it.

The intuition behind the probability of a GCI can perhaps be better understood if  seen by its complement.    So the probability of an axiom $C \dlto D$  is $p$ if the probability of its failure is $1-p$,  that is,  the probability of finding a model $\mathcal{I}$ in which there exists an  individual $a$  that is in  concept $C$  but not in concept $D$, $\mathcal{I} \models C(a)$ and  $\mathcal{I} \not\models D(a)$. Under this point of view, $P(C \dlto D)=p$ if there is a probability $p$ of finding a model in which either no individual instantiates concept $C$ or all individual instances of concept $C$ are  also individual instances of concept $D$. This has as a consequence the following, somewhat unintuitive behavior:  if $C$ is a ``rare’’ concept in the sense that most models have no instances of $C$,  then the probability $P(C \dlto D )$ tends to be quite high for any $D$,  for it  has as lower bound the probability of a model not having any instances of $C$.

Note that this  intuitive view also covers ABox statements, which can be expressed  as axioms   of the form $\{a\} \dlto C$ and $\{a\} \dlto \exists r.\{b\}$. But in these cases, all models always satisfy  the nominal $\{a \}$, so e.g. $P(\{a\} \dlto C) = p$  simply means that the probability of finding a model  in which $a$  is an instance of concept $C$ is $p$.

\subsection{Probabilistic Satisfaction and Extension Problems}
\label{sec:satext}

A probabilistic knowledge base $\tuple{\mathcal{C,P}}$ is satisfied by interpretations $\mathcal{I}_1, \ldots, \mathcal{I}_m$  if there exists a probability distribution $P$ over the interpretations such that
\begin{itemize}
	\item if $P(\mathcal{I}_i) > 0$ then $\mathcal{I}_i \models \mathcal{C}$;
	\item all probabilistic constraints in $\mathcal{P}$ hold.
\end{itemize}
This means that an interpretation can have a positive probability mass only if it satisfies CBox $\mathcal{C}$,  and the composition of all those interpretations  must verify the probability of constraints in $\mathcal{P}$. A knowledge base is \emph{satisfiable} if there exists a set of interpretations and a probability distribution over them that satisfy it.

\begin{definition}\rmfamily
	The \emph{probabilistic satisfiability problem}  for the logic \ELPP\  consists of,  given a probabilistic knowledge base $\tuple{\mathcal{C,P}}$,  decide if it is  satisfiable. 
\end{definition}

\begin{definition}
	The \emph{probabilistic extension problem}  for the logic \ELPP\  consists of,  given a satisfiable probabilistic knowledge base $\tuple{\mathcal{C,P}}$ and an axiom $C \dlto D$,  find the minimum and maximum values of $P(C \dlto D)$ that are satisfiable with $\tuple{\mathcal{C,P}}$.
\end{definition}

\begin{example}\label{ex:dengue2}\rm
	We create a probabilistic knowledge base by extending the CBox presented in Example~\ref{ex:dengue1} with the uncertain information described in Example~\ref{ex:dengue0}.
	
	Dengue symptoms are nonspecific, so in some cases the high fever is actually caused by dengue, represented by {\sfl  Ax1 := High-fever $\dlto \exists$hasCause.Dengue}, and in some other cases we may have a combination of high fever and rash being caused by dengue, represented by {\sfl  Ax2 := High-fever $\dland$ Rash $\dlto \exists$hasCause.Dengue}.  And the fact that joint pains are caused by dengue is represented by {\sfl  Ax3 := Joint-pain $\dlto \exists$hasCause.Dengue}. None of the axioms {\sfl Ax1, Ax2} or {\sfl Ax3} is always the case, but there is a probability that dengue is, in fact, the cause.	
	The following probabilistic statements represents uncertain knowledge on the relationship between dengue and its symptoms, as observed in a hospital.  

	\smallskip
	{\scriptsize\sfl
		\begin{tabular}{lp{30em}}
			$P($Ax2$) -  P($Ax1$) = 0.05$ & \rmfamily The probability of dengue being the cause is 5\% higher when both high fever and rash are symptoms, over just having high fever;\\
			$P($Ax3$) = 0.2$ & \rmfamily 20\% of cases of joint pain are caused by dengue.
		\end{tabular}
	}
	\smallskip
	
	\noindent
	We want to know if this probabilistic database is consistent and, in case it is, we want to find upper and lower bounds for the probability  that John is a suspect of having dengue, {\sfl $p_{lb} \leq P(\exists$suspectOf.Dengue(john)) $ \leq p_{ub}$}.	
\end{example}

In order to provide algorithms that  tackle both the decision and the extension problems,  we provide a linear algebra formulation of those problems.

\subsection{A Linear Algebraic View of Probabilistic Satisfaction and Extension Problems}

Initially, let us consider only restricted probabilistic constraints of the form $P(C_i \dlto D_i) = p_i$.  Consider a restricted probabilistic knowledge base $\tuple{\mathcal{C,P}}$ in which the number of probabilistic constraints is $|\mathcal{P}| = k$. Let $p$ be a vector of size $k$ of probabilistic constraint values.  Consider a finite number of interpretations, $\mathcal{I}_1, \ldots, \mathcal{I}_m$, and let us build a $k \times m$ matrix $A$ of $\{0,1\}$ elements $a_{ij}$ such that 
\[a_{ij} = 1 \textit{ iff } \mathcal{I}_j \models C_i \dlto D_i ~~\]

Note that column $A^j$ contains the evaluations by interpretation $\mathcal{I}_j$ of the axioms submitted to probabilistic constraints. Given a CBox $\mathcal{C}$ and sequence of $n$ axioms $C_1 \dlto D_1, \ldots, C_n \dlto D_n$, a $\{0,1\}$-vector $u$ of size $n$ \emph{represents} a $\mathcal{C}$-satisfiable interpretation $\mathcal{I}$ if $\mathcal{I} \models \mathcal{C}$, and  $c_i = 1$ iff $\mathcal{I} \models C_i \dlto D_i
$ for $1 \leq i \leq n$. The idea is to assign positive probability mass $pi_j >0$ only if  $A^j$ represents a $\mathcal{C}$-satisfiable interpretation.

Let $\pi$ be  a vector of size $m$ representing a probability distribution. Consider the following set of constraints associated to $\tuple{\mathcal{C,P}}$,  expressing  the fact that $\pi$ is a probability distribution that respects the constraints given by matrix $A$:

\begin{align}
	A \cdot \pi &= p \nonumber \\
	\sum_{j=1}^m \pi_j &= 1 \label{eq:pel1} \\
	\pi &\geq 0 \nonumber 	
\end{align}
 
The fact that constraints~\eqref{eq:pel1} actually represent satisfiability is given by the following.

\begin{lemma}\label{lemma:pel1}
	A probabilistic knowledge base $\tuple{\mathcal{C,P}}$ with restricted  probabilistic constraints is satisfiable iff there is a vector $\pi$ that satisfies its associated constraints ~\eqref{eq:pel1}.
\end{lemma}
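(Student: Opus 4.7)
The plan is to prove the two directions of the biconditional by direct construction, since the matrix-vector formulation in~\eqref{eq:pel1} is essentially a column-by-column translation of the interpretation-based definition of PKB satisfiability into linear-algebraic terms. No deep argument should be needed; the work is bookkeeping.

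For the forward direction, I would assume $\tuple{\mathcal{C,P}}$ is satisfied by interpretations $\mathcal{I}_1,\ldots,\mathcal{I}_m$ with a distribution $P$ such that $P(\mathcal{I}_j) > 0$ implies $\mathcal{I}_j \models \mathcal{C}$. Set $\pi_j := P(\mathcal{I}_j)$ and take $A$ to be the $k \times m$ matrix with $a_{ij} = 1$ iff $\mathcal{I}_j \models C_i \dlto D_i$. The conditions $\pi \geq 0$ and $\sum_j \pi_j = 1$ are immediate from $P$ being a distribution. For $A\cdot\pi = p$, the $i$-th row gives $\sum_j a_{ij}\pi_j = \sum_{\mathcal{I}_j \models C_i \dlto D_i} P(\mathcal{I}_j)$, which by the defining equation for $P(C_i \dlto D_i)$ equals $p_i$.

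For the converse, I would assume a matrix $A$ whose columns come from $\mathcal{C}$-satisfiable interpretations $\mathcal{I}_1,\ldots,\mathcal{I}_m$ (this is the standing assumption on columns with positive mass in the paragraph preceding the lemma) together with a vector $\pi$ meeting~\eqref{eq:pel1}. Defining $P(\mathcal{I}_j) := \pi_j$ yields a genuine probability distribution over the $\mathcal{I}_j$, and by the column condition every $\mathcal{I}_j$ with $P(\mathcal{I}_j) > 0$ satisfies $\mathcal{C}$. Reversing the row computation from the forward direction then shows $P(C_i \dlto D_i) = \sum_{\mathcal{I}_j \models C_i \dlto D_i} \pi_j = (A\pi)_i = p_i$ for each restricted constraint, so all of $\mathcal{P}$ holds.

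The only subtlety — and the one step worth being explicit about — is that the constraints in~\eqref{eq:pel1} read in isolation do not mention $\mathcal{C}$; the linkage to the CBox is encoded in the stipulation that each column $A^j$ (used with positive mass) represents a $\mathcal{C}$-satisfiable interpretation. Once that is treated as part of the formulation, the lemma reduces to the two bijective constructions above. In particular, there is no need at this stage to bound $m$ or invoke any Carathéodory-type argument, since only \emph{existence} of some finite representation is being asserted.
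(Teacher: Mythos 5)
Your proof is correct and matches the paper's intent: the paper states Lemma~\ref{lemma:pel1} without proof, treating it as immediate from the construction of $A$ and the definition $P(C \dlto D) = \sum_{\mathcal{I}_i \models C \dlto D} P(\mathcal{I}_i)$, and your two-direction bookkeeping is exactly that omitted routine verification. You also correctly flag the one genuine subtlety --- that the $\mathcal{C}$-satisfiability of positive-mass columns is part of the formulation rather than of the linear constraints themselves, and that no Carath\'eodory-type bound on $m$ is needed here (the paper defers that to Lemma~\ref{lemma:carat1}).
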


When the probabilistic knowledge base is satisfiable, the number $m$ of interpretations associated to the columns of matrix $A$ may be exponentially large with respect to the number $k$ of constraints in $\mathcal{P}$.   However, Carath\'eodory's Theorem~\cite{Eck93} guarantees that if there is a solution to~\eqref{eq:pel1} then there is also a small solution, namely one with at most $k+1$  positive values.

\begin{lemma}\label{lemma:carat1}
	If constraints~\eqref{eq:pel1} have a solution then there exists a solution $\pi$ with at most $k+1$ values such that $\pi_j > 0$.
\end{lemma}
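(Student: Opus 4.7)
The plan is to reduce the statement to a direct application of Carathéodory's theorem on convex cones (or, equivalently, on convex hulls in $\mathbb{R}^k$). First, I would rewrite the system~\eqref{eq:pel1} as a single linear system in $\mathbb{R}^{k+1}$. Let $A'$ be the $(k+1)\times m$ matrix obtained from $A$ by appending a final row of all $1$'s, and let $p' \in \mathbb{R}^{k+1}$ be the vector obtained from $p$ by appending a final entry $1$. Then the constraints~\eqref{eq:pel1} are equivalent to $A' \pi = p'$ with $\pi \geq 0$, i.e.\ the requirement that $p'$ lies in the convex cone generated by the columns $A'^{1},\dots,A'^{m}$ of $A'$.

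Second, I would invoke Carathéodory's theorem in its conic form: if a vector $v \in \mathbb{R}^{d}$ belongs to the convex cone of a set $S \subseteq \mathbb{R}^{d}$, then $v$ can be written as a nonnegative combination of at most $d$ linearly independent elements of $S$. Applied to $d = k+1$, $v = p'$, and $S = \{A'^{j}\}_{j=1}^m$, this yields an index set $J \subseteq \{1,\dots,m\}$ with $|J| \leq k+1$ and nonnegative coefficients $\pi_j > 0$ for $j \in J$ such that $\sum_{j \in J} \pi_j A'^{j} = p'$. Setting $\pi_j = 0$ for $j \notin J$ then gives a solution of~\eqref{eq:pel1} in which at most $k+1$ components are strictly positive; the fact that the last row of $A'$ forces $\sum_j \pi_j = 1$ ensures the distribution is indeed a probability distribution.

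Alternatively, and essentially equivalently, one may argue with the classical (non-conic) Carathéodory theorem in $\mathbb{R}^{k}$: since $\sum \pi_j = 1$ and $\pi \geq 0$, the original equation $A \pi = p$ exhibits $p$ as a convex combination of the $m$ columns of $A$, and Carathéodory gives a convex combination with at most $k+1$ of them used. No additional description-logic reasoning is needed here; the lemma is purely about the linear-algebraic shape of the system~\eqref{eq:pel1}. The only point that requires a small amount of care is making sure that the normalization constraint $\sum \pi_j = 1$ is accounted for, which is why we work in dimension $k+1$ (either by explicit augmentation, or implicitly via the convex-hull formulation); there is no deep obstacle beyond this bookkeeping.
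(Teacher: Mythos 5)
Your proposal is correct and takes essentially the same route as the paper: the paper states Lemma~\ref{lemma:carat1} without a detailed proof, justifying it as a direct consequence of Carath\'eodory's theorem, which is precisely the classical convex-combination argument in $\mathbb{R}^k$ (equivalently, your conic version in $\mathbb{R}^{k+1}$ with the appended all-ones row) that you spell out. Your handling of the normalization constraint $\sum_j \pi_j = 1$ is the only bookkeeping the paper leaves implicit, and you have it right.
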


Now instead of considering only a restricted form of probability constraints,  let us consider  constraints of the form ~\eqref{eq:ctrnt} as defined in Section~\ref{sec:pgci},  namely
\begin{align} 
b_{i1} \cdot P(C_1 \dlto D_1) + \cdots + b_{i\ell} \cdot P(C_\ell \dlto D_\ell) \bowtie q_i ~~, \nonumber
\end{align}
where $b_{ij}, q_i \in \mathbb{Q}$,  $\bowtie\ \in \{\leq, \geq,=\}$ and $i=1, \ldots k$. 

We assume there are at most $\ell$ axioms  mentioned  in $\mathcal{P}$,  such that $b_{i,j}=0$  if $P(C_j \dlto D_j)$ does not occur at  constraint $i$.  Consider a matrix $B_{k\times  \ell}$ and a vector $x$ of size  $\ell$.  We now have the following  set of associated constraints to the probabilistic knowledge base $\tuple{\mathcal{C,P}}$, extending~\eqref{eq:pel1}:

\begin{align}
	B \cdot x &= q \nonumber \\
	A \cdot \pi &= x \label{eq:pel++} \\
	\sum_{j=1}^m \pi_j &= 1 \nonumber \\
	x,\pi &\geq 0 \nonumber 	
\end{align}

As before, $A$'s columnns are $\{0,1\}$-representations of the validity of the axioms occurring in $\mathcal{P}$ under the interpretation $\mathcal{I}_j$. Constraints~\eqref{eq:pel++} are \emph{solvable} if there are vectors $x$ and $\pi$ that verify all conditions. Analogously, the solvability of constraints~\eqref{eq:pel++} characterize  the satisfiability of probabilistic knowledge bases with unrestricted constraints.

\begin{lemma}\label{lemma:pel++}
	A probabilistic knowledge base $\tuple{\mathcal{C,P}}$ is satisfiable if and only if its associated set of constraints ~\eqref{eq:pel++} are solvable.
\end{lemma}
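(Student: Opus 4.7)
My plan is to prove both directions of the equivalence by essentially reusing Lemma~\ref{lemma:pel1}, with the auxiliary vector $x$ acting as an explicit handle for the vector of axiom probabilities. The substitution $x=A\cdot \pi$ turns the block of equalities involving $B$ into a linear statement about probabilities of axioms, which is precisely what the constraints in $\mathcal{P}$ express.

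For the ``only if'' direction, I would assume $\tuple{\mathcal{C,P}}$ is satisfiable and fix a distribution $P$ over interpretations $\mathcal{I}_1,\ldots,\mathcal{I}_m$ such that $P(\mathcal{I}_j)>0$ implies $\mathcal{I}_j\models\mathcal{C}$ and every constraint in $\mathcal{P}$ holds. Set $\pi_j:=P(\mathcal{I}_j)$ and define $x_i:=P(C_i\dlto D_i)=\sum_{j:\mathcal{I}_j\models C_i\dlto D_i}\pi_j$ for each of the $\ell$ axioms appearing in $\mathcal{P}$. Then $x=A\cdot\pi$ by construction, $\sum_j\pi_j=1$ and $\pi\geq 0$ because $P$ is a distribution, and $x\geq 0$ because each $x_i$ is a sum of nonnegative numbers. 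Finally, rewriting every probabilistic constraint $b_{i1}P(C_1\dlto D_1)+\cdots+b_{i\ell}P(C_\ell\dlto D_\ell)\bowtie q_i$ as a line of $B\cdot x\bowtie q$ (with the understanding, as in equation~\eqref{eq:pel++}, that $\leq$ and $\geq$ rows are put in equality form via nonnegative slacks absorbed into additional columns of $B$) gives exactly $B\cdot x=q$.

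For the ``if'' direction, I take a solution $(x,\pi)$ of~\eqref{eq:pel++} and recall that, by the convention introduced just before the lemma, columns of $A$ range only over $\{0,1\}$-vectors representing $\mathcal{C}$-satisfiable interpretations. Hence whenever $\pi_j>0$, the interpretation $\mathcal{I}_j$ attached to column $A^j$ satisfies $\mathcal{C}$. The normalization $\sum_j\pi_j=1$ and $\pi\geq 0$ make $\pi$ a probability distribution over interpretations. For each axiom $C_i\dlto D_i$ appearing in $\mathcal{P}$, the $i$-th row of $A\cdot\pi=x$ reads $x_i=\sum_{j:\mathcal{I}_j\models C_i\dlto D_i}\pi_j=P(C_i\dlto D_i)$. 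Substituting these identities into $B\cdot x=q$ recovers each probabilistic constraint of~\eqref{eq:ctrnt}, so $\tuple{\mathcal{C,P}}$ is satisfied by $\pi$.

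The main obstacle I anticipate is bookkeeping rather than mathematical depth: (i) making precise that the columns of $A$ are restricted to representations of $\mathcal{C}$-satisfiable interpretations (so $\pi_j>0$ automatically enforces the first bullet in the satisfiability definition), and (ii) handling the inequality flavors $\bowtie\in\{\leq,\geq,=\}$ uniformly inside the equality system $B\cdot x=q$, which is done in the standard way by introducing nonnegative slack variables. Once these two points are settled, the argument reduces to two back-to-back translations between a probability distribution over interpretations and a feasible $(x,\pi)$ pair, mirroring Lemma~\ref{lemma:pel1} with the extra layer $B\cdot x=q$ stacked on top.
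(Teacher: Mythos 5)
Your proof is correct and is exactly the argument the paper intends: the paper states Lemma~\ref{lemma:pel++} without an explicit proof, treating it as immediate from the construction of the system~\eqref{eq:pel++}, and your two back-to-back translations between a distribution over $\mathcal{C}$-models and a feasible pair $(x,\pi)$ (with $x = A\cdot\pi$ as the vector of axiom probabilities) spell out precisely that correspondence, mirroring Lemma~\ref{lemma:pel1} with the layer $B\cdot x = q$ on top. The two bookkeeping points you flag --- that positive-mass columns of $A$ must represent $\mathcal{C}$-satisfiable interpretations, and that constraints with $\bowtie\ \in \{\leq,\geq\}$ are put into equality form via nonnegative slack variables --- are likewise left implicit in the paper, so your treatment is, if anything, slightly more explicit than the source.
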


\begin{example}\label{ex:dengue3}\rm
	Consider four interpretations for the knowledge base described in Example~\ref{ex:dengue2}. Interpretation $\mathcal{I}_1$  satisfies CBox $\mathcal{C}$ of Example~\ref{ex:dengue1} and also axioms {\sfl Ax1, Ax2, Ax3}. Interpretation $\mathcal{I}_2$  satisfies $\mathcal{C}$ and axioms {\sfl Ax2, Ax3} but not {\sfl Ax1}. Interpretation $\mathcal{I}_3$  satisfies $\mathcal{C}$  and only axiom {\sfl Ax3}. Interpretation $\mathcal{I}_4$ satisfies only $\mathcal{C}$ but none of the axioms. We then consider a probability distribution $\pi$, such that $\pi(\mathcal{I}_1) = 5\%$, $\pi(\mathcal{I}_2) = 5\%$, $\pi(\mathcal{I}_3) = 10\%$, $\pi(\mathcal{I}_4) = 80\%$.  The following shows that all probabislistic restrictions are satisfied.
	
	\[
	\begin{array}{l}
	\sf Ax1 \\ \sf Ax2 \\ \sf Ax3 \\ 1
	\end{array}
	~~~
	\left[
	\begin{array}{cccc}
	1 & 0 & 0 & 0 \\
	1 & 1 & 0 & 0 \\
	1 & 1 & 1 & 0 \\
	1 & 1 & 1 & 1
	\end{array}
	\right]
	\cdot
	\left[
	\begin{array}{l}
	0.05 \\ 0.05 \\ 0.10 \\ 0.80
	\end{array}
	\right]
	=
	\left[
	\begin{array}{c}
	0.05 \\ 0.10 \\ 0.20 \\ 1.00
	\end{array}
	\right]	
	\]
	
	\noindent
	So 	$P(${\sfl Ax2}$) - P(${\sfl Ax1}$) = 0.05$ and $P(${\sfl Ax3}$) = 0.2$.
\end{example}

When constraints~\eqref{eq:pel++} are \emph{solvable}, vector $x$ has size $\ell = O(k)$,  but vector $\pi$ can be  exponentially large in $k$.   
By a simple linear algebraic trick, constraints of the form~\eqref{eq:pel++} can  he presented in the following form:
\begin{align}
	C \cdot \pi^x &= d \label{eq:pel} \\
	\pi^x &\geq 0 \nonumber 	
\end{align}
\noindent
In fact,  it suffices  to make:
\[
C = \left[
	\begin{array}{c;{3pt/3pt}c}
	0 & B \\ \hdashline[3pt/3pt]
	A  & -I_\ell \\ \hdashline[3pt/3pt]
	\mathbf{1} & 0
	\end{array}
	\right]
;~~~~~~~~
d = \left[
	\begin{array}{c}
	q \\ \hdashline[3pt/3pt]
	0 \\ \hdashline[3pt/3pt]
	1 
	\end{array}
	\right]
;~~~~~~~~
\pi^x = \left[\begin{array}{c} \pi \\ \hdashline[2pt/2pt] x \end{array}\right]
\]
where $I_\ell$ is the identity matrix, and $\mathbf{1}$ is a row of $|\pi|$ 1's.   When we say that the column $C^j$ represents a $\mathcal{C}$-satisfiable interpretation, we actually mean that the part of $C^j$ that corresponds to some column $A^j$ that represents a $\mathcal{C}$-satisfiable interpretation, its $k$-initial positions are $0$ and its last element is $1$. Note that $C$ has $k + \ell + 1$ rows and $|\pi|+\ell$ columns. Again,  Carath\'eodory's Theorem guarantees small solutions.

\begin{lemma}\label{lemma:carat2}
	If constraints~\eqref{eq:pel++} have a solution then there exists a solution $\pi^x$ with at most $k+\ell+1$ values such that $\pi^x_j > 0$.
\end{lemma}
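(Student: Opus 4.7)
The plan is to leverage the reformulation of \eqref{eq:pel++} into the homogenized form \eqref{eq:pel}, namely $C \cdot \pi^x = d$ with $\pi^x \geq 0$, and then invoke Carathéodory's theorem for conic combinations exactly as was done for Lemma~\ref{lemma:carat1}.

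First I would observe that the construction of $C$, $d$, and $\pi^x$ preceding the statement already establishes the equivalence between solvability of \eqref{eq:pel++} and the existence of a nonnegative vector $\pi^x$ with $C \pi^x = d$. So it suffices to prove the Carathéodory-style bound on the support of $\pi^x$ for the system \eqref{eq:pel}. The key observation is that $C$ has exactly $k+\ell+1$ rows: $k$ rows for the linear probabilistic constraints $B\cdot x = q$, $\ell$ rows for the coupling block $A\cdot \pi = x$, and one row enforcing $\sum_j \pi_j = 1$.

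Next I would apply Carathéodory's theorem in its conic form: if a vector $d$ lies in the conic hull of a finite set of vectors in $\mathbb{R}^{k+\ell+1}$, then it lies in the conic hull of an affinely (in fact linearly) independent subfamily of size at most $k+\ell+1$. Starting from any feasible $\pi^x$ with $C \pi^x = d$ and $\pi^x \geq 0$, if the columns of $C$ indexed by the support $\{j : \pi^x_j > 0\}$ are linearly dependent, one can find a direction $v$ supported on these indices with $Cv = 0$, and then shift $\pi^x \leftarrow \pi^x + t v$ for the largest $t$ (of appropriate sign) that keeps $\pi^x \geq 0$; this zeroes out at least one coordinate while preserving $C \pi^x = d$ and $\pi^x \geq 0$. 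Iterating this pivot step terminates with a feasible solution whose positive-support columns of $C$ are linearly independent, hence of cardinality at most $k+\ell+1$.

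I do not expect any serious obstacle: the main content is purely the finite-dimensional convex-geometry fact used in the proof of Lemma~\ref{lemma:carat1}, now applied to the larger system. The only point to be careful about is bookkeeping the row count of $C$ correctly (and noting that the $\{0,1\}$ structure of the $A$-block is irrelevant for this argument, since Carathéodory only uses linear independence over $\mathbb{R}$). Once that is in place, the bound $k+\ell+1$ follows immediately.
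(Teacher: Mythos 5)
Your proposal is correct and takes essentially the same route as the paper: the paper likewise passes from~\eqref{eq:pel++} to the homogenized system~\eqref{eq:pel} with the $(k+\ell+1)$-row matrix $C$ and then simply invokes Carath\'eodory's theorem, which your support-reduction pivot argument just makes explicit. There is no gap; your bookkeeping of the row count ($k$ rows for $B\cdot x = q$, $\ell$ for $A\cdot\pi = x$, one for the normalization) matches the construction in the paper.
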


We now show that probabilistic satisfiability is NP-hard.

\begin{lemma}\label{lemma:nphard}
	The satisfiability problem for probabilistic knowledge bases is NP-hard.
\end{lemma}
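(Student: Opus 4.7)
The plan is to reduce from classical Boolean satisfiability. Given a CNF formula $\phi = C_1 \wedge \cdots \wedge C_m$ over variables $x_1,\ldots,x_n$, I will construct in polynomial time a probabilistic knowledge base $\tuple{\mathcal{C},\mathcal{P}}$ that is probabilistically satisfiable if and only if $\phi$ is satisfiable. Fix a single individual name $a$ and, for each variable $x_i$, two fresh concept names $T_i, F_i$, intended to witness the truth and falsity of $x_i$ at $a$. Put into the CBox $\mathcal{C}$ the mutual-exclusion axioms $T_i \dland F_i \dlto \bot$ for every $i$, together with one ``forbidden-conjunction'' axiom per clause: if $C_j = \ell_{j,1} \vee \cdots \vee \ell_{j,k_j}$, add $G_{j,1} \dland \cdots \dland G_{j,k_j} \dlto \bot$, where $G_{j,s} = F_i$ when $\ell_{j,s} = x_i$ and $G_{j,s} = T_i$ when $\ell_{j,s} = \neg x_i$; this axiom forbids exactly the unique assignment at $a$ that would falsify $C_j$. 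Put into the PBox $\mathcal{P}$ one equality per variable, $P(\{a\} \dlto T_i) + P(\{a\} \dlto F_i) = 1$. Both $\mathcal{C}$ and $\mathcal{P}$ are linear in $|\phi|$, so the reduction runs in polynomial time.

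For the forward direction, from a satisfying assignment $v$ of $\phi$ build the single-element interpretation $\mathcal{I}$ with $\Delta^\mathcal{I} = \{a^\mathcal{I}\}$, $a^\mathcal{I} \in T_i^\mathcal{I}$ iff $v(x_i)=1$, and $a^\mathcal{I} \in F_i^\mathcal{I}$ iff $v(x_i)=0$. The mutual-exclusion axioms hold trivially; each clause axiom holds because $v$ makes some literal of $C_j$ true, so $a$ escapes the offending conjunction; and setting $\pi(\mathcal{I})=1$ satisfies every probabilistic equation. Conversely, assume $\tuple{\mathcal{C},\mathcal{P}}$ is satisfied by a distribution $\pi$. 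Fixing any variable $i$, partition the support of $\pi$ according to whether $a^{\mathcal{I}_j}$ lies in $T_i$ only, in $F_i$ only, in both, or in neither. The axiom $T_i \dland F_i \dlto \bot$ eliminates the ``both'' case in any $\mathcal{C}$-model, and then the equation $P(\{a\}\dlto T_i)+P(\{a\}\dlto F_i)=1$ combined with $\sum_j \pi_j = 1$ and $\pi \geq 0$ forces the ``neither'' total mass to vanish, so every positive-probability interpretation falls into one of the two XOR cases. Each such $\mathcal{I}_j$ thus induces a complete truth assignment $v_{\mathcal{I}_j}$, and the forbidden-conjunction axioms in $\mathcal{C}$ guarantee that $v_{\mathcal{I}_j}$ satisfies every clause of $\phi$.

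The main obstacle will be the backward step: a single linear equality $P(\alpha^+) + P(\alpha^-) = 1$ is, a priori, weaker than a pointwise XOR on the support of $\pi$, so one must use both the mutual-exclusion CBox axiom and the non-negativity of $\pi$ to collapse the residual ``neither'' mass to zero in every supported interpretation. Once this point is settled the SAT encoding via the conjunctive $\bot$-axioms goes through cleanly, and polynomial reducibility from SAT delivers NP-hardness of probabilistic satisfiability for \ELPP.
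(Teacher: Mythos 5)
Your proposal is correct and follows essentially the same reduction as the paper: a single individual $a$, a complementary pair of concepts per propositional variable (your $T_i,F_i$, the paper's $X_i,\overline{X}_i$) tied together by the equality $P(\cdot)+P(\cdot)=1$, and one forbidden conjunction per clause, the only cosmetic difference being that you enforce mutual exclusion and the clause constraints as hard $\dlto\bot$ axioms in the CBox while the paper encodes them as probability-$0$ constraints in the PBox over an empty CBox. If anything, your explicit argument that the equality constraint forces a pointwise exclusive-or on every positive-mass interpretation makes rigorous a step the paper passes over with ``clearly,'' since an interpretation failing $\{a\}\dlto T_i$ need not satisfy $\{a\}\dlto F_i$ without that mass-counting argument.
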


\begin{proof}
	We reduce SAT to probabilistic satisfiability over \ELPP;   unlike PSAT\footnote{PSAT, or Probabilistic SATisfiability, consists of determining the satisfiability of a set of probabilistic assertions on classical propositional formulas~\cite{FDB2011,DF2015b,DCF2014}.},  it does not suffice to set all probabilities to 1,  as \ELPP\  is  decidable in polynomial time.  Instead,  we show how to represent 3-SAT clauses (i.e.  disjunction of three  literals)  as a set of probabilistic axioms, basically probabilistic ABox statements.   For that,  consider a set of propositional variables $x_1, \ldots, x_n$ upon which the set $\Gamma$ of clauses of  the SAT problem  are built.   On the probabilistic  knowledge base side,  consider a single individual $a$ and $2n$ basic concepts $ X_1, \ldots, X_n$ and $\overline{X}_1, \ldots, \overline{X}_n$,   subject to the following $2n$  restrictions:	
	\begin{align} \label{eq:neg}
		P(a \dlto X_i) + P(a \dlto \overline{X}_i) &= 1 \\
		P(a \dlto X_i \dland \overline{X}_i) &= 0 \nonumber
	\end{align}
	
	The idea is to represent the propositional atomic information $x_i$  by the axiom $a \dlto X_i$,  its  negation by $a \dlto \overline{X}_i$, and the  fact that  a clause $y_i \lor \ldots \lor y_m$  holds is represented by the probabilistic statement 
	\begin{align} \label{eq:clause}
		P(a \dlto \overline{Y}_i \dland \ldots \dland \overline{Y}_m) = 0 .
	\end{align}
	
	Given $\Gamma$, we build a probabilistic knowledge base $\tuple{\varnothing,\mathcal{P}}$ by the representation~\eqref{eq:clause} of the clauses in $\Gamma$ plus $2n$  assertions of the form~\eqref{eq:neg}. We claim that $\Gamma$  is satisfiable iff $\tuple{\varnothing,\mathcal{P}}$ is.	In fact, suppose $\Gamma$ is satisfiable by valuation $v$, make  a \ELPP\   model $\mathcal{I}$ such that $\mathcal{I} \models a \dlto X_i$ iff $v(x_i) = 1$   and assign probability 1 to $\mathcal{I}$;   clearly $\tuple{\varnothing,\mathcal{P}}$ is satisfiable. Now suppose $\tuple{\varnothing,\mathcal{P}}$ is satisfiable, so there exists an \ELPP\ model $\mathcal{I}$  which is assigned  probability   strictly bigger than 0.  Construct a valuation $v$  such that $v(x_i) = 1$ iff $\mathcal{I} \ \models a \dlto X_i$. Clearly $v(\Gamma)=1$, otherwise there is a clause $y_i \lor \ldots \lor y_m$ in $\Gamma$ such that $v(y_i \lor \ldots \lor y_m)=0$ and thus $\mathcal{I} \models a \dlto \overline{Y}_i$ for $i=1, \ldots, m$; then $P(a\dlto \overline{Y}_i \dland \ldots \dland \overline{Y}_m) \geq P(\mathcal{I}) > 0$, contradicting~\eqref{eq:clause}.	
\end{proof}

\begin{theorem}\label{th:pel++}
	The satisfiability problem for probabilistic knowledge bases is NP-complete.%
\end{theorem}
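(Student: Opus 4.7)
Lemma~\ref{lemma:nphard} already gives NP-hardness, so the plan is to exhibit a polynomial-time verifiable certificate of satisfiability. The key ingredient is Lemma~\ref{lemma:carat2}: if $\tuple{\mathcal{C,P}}$ is satisfiable, then the associated system~\eqref{eq:pel++} admits a solution $\pi^x$ with at most $k+\ell+1$ nonzero entries. The certificate will thus consist of the identity of those at most $k+\ell+1$ columns of the matrix $C$ together with the positive rational multipliers assigned to them. Each chosen column either comes from the identity block $-I_\ell$ (trivially valid) or from a column associated to a $\{0,1\}^\ell$-vector $A^j$ describing the evaluation of the axioms of $\mathcal{P}$ under some $\mathcal{C}$-satisfiable interpretation $\mathcal{I}_j$.

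Given such a guess, two things must be checked in polynomial time. First, the linear side: the support and rational multipliers must actually solve $C\cdot\pi^x = d$ with $\pi^x\geq 0$. Once the support is fixed, this is a small linear feasibility problem whose bit-size complexity is polynomial by standard LP theory, so verification is immediate. Second, the logical side: every guessed column $A^j$ must be realizable, i.e., there must exist an interpretation $\mathcal{I}_j\models\mathcal{C}$ such that $\mathcal{I}_j\models C_i\dlto D_i$ iff $A^j_i=1$. The ``satisfies'' part is handled by augmenting $\mathcal{C}$ with $\mathcal{C}^+_j = \{C_i\dlto D_i \mid A^j_i=1\}$ and invoking the polynomial-time \ELPP-consistency algorithm of~\cite{BBL2005a}. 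For the ``violates'' part, one checks, for each $i$ with $A^j_i=0$, that $\mathcal{C}\cup\mathcal{C}^+_j\not\models C_i\dlto D_i$, which is again decidable in polynomial time since \ELPP-subsumption is.

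The main obstacle is that realizability of $A^j$ requires a \emph{single} model simultaneously satisfying $\mathcal{C}^+_j$ and violating every axiom outside $\mathcal{C}^+_j$, whereas the individual non-entailment tests only provide a separate counter-model per violated axiom. This is overcome by combining the individual counter-models into a single one via a disjoint-union-like construction (identifying the interpretations of nominals across components) that preserves all axioms in $\mathcal{C}\cup\mathcal{C}^+_j$ while retaining each counter-witness, thus yielding a single interpretation with the prescribed $\{0,1\}$-pattern. With both verifications carried out in polynomial time, the nondeterministic algorithm places the problem in NP, and together with Lemma~\ref{lemma:nphard} we obtain NP-completeness.
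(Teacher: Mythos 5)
Your overall route---NP-hardness from Lemma~\ref{lemma:nphard}, NP membership via the small-support solution guaranteed by Lemma~\ref{lemma:carat2}, guessing the support and rational weights and checking the linear system---is exactly the paper's plan, and you correctly identify the point the paper glosses over: one must verify that each guessed column is \emph{exactly} realizable (satisfying precisely the axioms with $A^j_i=1$), since an interpretation satisfying more axioms than its column claims would distort the probabilities $A\cdot\pi$. The gap is in your proposed deterministic verification of realizability. Reducing it to one consistency check plus per-axiom non-entailment checks, and then merging the separate countermodels, fails in \ELPP: because of nominals the logic is not closed under (identified) disjoint unions, and separate non-entailments do not imply the existence of a \emph{single} model violating all the axioms with $A^j_i=0$ simultaneously. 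Concretely, let $\mathcal{C}=\{\top \dlto \{a\}\}$, which forces a one-element domain, and consider the axioms $X \dlto Y$ and $Y \dlto X$. Neither is entailed by $\mathcal{C}$, so your checks accept the column $(0,0)$; but violating $X \dlto Y$ requires $a \in X^\mathcal{I}\setminus Y^\mathcal{I}$ while violating $Y \dlto X$ requires $a \in Y^\mathcal{I}\setminus X^\mathcal{I}$, which is impossible in one model. Your verifier would therefore wrongly declare the knowledge base with PBox $\{P(X\dlto Y)=0,\ P(Y \dlto X)=0\}$ satisfiable, although every model of $\mathcal{C}$ satisfies at least one of the two axioms, forcing $P(X\dlto Y)+P(Y\dlto X)\geq 1$. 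Any merge that identifies the nominal $a$ across components collapses the whole merged domain to a single point here, so no ``disjoint-union-like construction'' can retain both counter-witnesses; this is a genuine non-convexity of \ELPP\ with nominals, not a technicality to be patched.

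The repair is to strengthen the certificate rather than the verifier: guess, along with the at most $k+\ell+1$ columns and their (polynomial bit-size) weights, a polynomial-size witness of exact realizability for each column---for instance a satisfying valuation of the SAT encoding the paper itself uses later (Lemma~\ref{thm:interp_cost} and Section~\ref{sec:colgenproc}), which jointly asserts $\mathcal{I}\models\mathcal{C}$, $\mathcal{I}\models C_i \dlto D_i$ for each $i$ with $A^j_i=1$, and a concrete counter-witness for each $i$ with $A^j_i=0$. Checking such a valuation is polynomial, so membership in NP is preserved; this is in effect what the paper's terse proof does by ``guessing the witness'' wholesale, instead of attempting an entailment-based deterministic verification. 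Your linear-algebraic check of $C\cdot\pi^x=d$, $\pi^x\geq 0$ on the guessed support is fine as stated.
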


\begin{proof}
	Lemma~\ref{lemma:carat2} provides a small witness for every problem, such that by guessing that witness we can show in  polynomial time that the constraints  are solvable; so the problem is in NP.  Lemma~\ref{lemma:nphard} provides NP-hardness.
\end{proof}

\section{Column Generation Algorithm for Probabilistic Knowledge Base Satisfiability}
\label{sec:linalg}

An algorithm for deciding probabilistic knowledge base satisfiability has to provide a means to find a solution for restrictions~\eqref{eq:pel++} if one exists; otherwise determine no solution is possible. Furthermore, we will assume that the constraints are presented in format~\eqref{eq:pel1}.


We now provide a method similar to PSAT-solving to decide the satisfiability of probabilistic knowledge base $\tuple{\mathcal{C,P}}$. We construct a vector $c$ of costs whose size is the same as size of $\pi^x$ such that $c_j \in \{0,1\}$, $c_j=1$ if column $C^j$ satisfies the following condition: either the first $k$ positions are not 0, or the next $\ell$ cells representing $A^j$ correspond to an interpretation that \emph{does not} satisfy the CBox $\mathcal{C}$, or the last position of $C^j$ is not $1$; if $C^j$ is one of the last $\ell$ columns, or its first $k$ elements are 0 and the next $\ell$ elements are a representation of an interpretation $A^j$ that is $\mathcal{C}$-satisfiable and its last element is $1$, 
then $c_j=0$. Then we generate the following optimization problem associated to \eqref{eq:pel1}.

\begin{align} \label{eq:PELPP_prog}
\begin{array}{lll}
\min 				& c' \cdot \pi^x \\
\mbox{subject to} 	& C \cdot \pi^x=d \\
& \pi^x \geq 0
\end{array}
\end{align}

\begin{lemma}\label{thm:PELPP_prog}
	Given a probabilistic knowledge base $\tuple{\mathcal{C,P}}$ and its associated linear algebraic restrictions \eqref{eq:pel++}, $\tuple{\mathcal{C,P}}$ is satisfiable if, and only if, minimization problem \eqref{eq:PELPP_prog} has a minimum such that $c'\pi=0$.
\end{lemma}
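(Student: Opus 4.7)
The plan is to prove both implications by exploiting the nonnegativity of $c$ and $\pi^x$, which makes $c'\pi^x = 0$ equivalent to the statement that every column with positive weight has zero cost (i.e. ``good'' in the sense of representing a $\mathcal{C}$-satisfiable interpretation, or being one of the $\ell$ auxiliary columns). Once that observation is made, each direction reduces to applying Lemma~\ref{lemma:pel++}.

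For the forward direction, I would assume $\tuple{\mathcal{C,P}}$ is satisfiable. By Lemma~\ref{lemma:pel++}, constraints~\eqref{eq:pel++} are solvable, and by the linear algebraic reformulation this gives some $\pi^x \geq 0$ with $C\pi^x = d$. The positive entries of $\pi$ correspond, by the definition of satisfiability, to interpretations $\mathcal{I}_j$ with $\mathcal{I}_j \models \mathcal{C}$, so the corresponding columns of $C$ have the shape described in the construction of $c$ (first $k$ entries equal to 0, last entry equal to $1$, middle block representing a $\mathcal{C}$-satisfiable interpretation), hence $c_j = 0$. For the last $\ell$ columns (coming from $-I_\ell$ and $B$) the definition assigns $c_j = 0$ directly. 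Thus $c'\pi^x = 0$ on a feasible point, so the optimum of~\eqref{eq:PELPP_prog} is $0$ (it cannot be negative since $c,\pi^x \geq 0$).

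For the converse, suppose~\eqref{eq:PELPP_prog} attains minimum value $0$ at some $\pi^x$. Then $\pi^x$ is feasible for $C\pi^x = d$, $\pi^x \geq 0$, and $c'\pi^x = \sum_j c_j \pi^x_j = 0$. Because every term is nonnegative, we must have $c_j \pi^x_j = 0$ for all $j$, i.e.\ $\pi^x_j > 0$ implies $c_j = 0$. By the definition of $c$, this means every column of $C$ receiving positive mass either is one of the last $\ell$ auxiliary columns or represents a $\mathcal{C}$-satisfiable interpretation. Reading back through the block structure of $C$ and $d$, the restriction of $\pi^x$ to its $\pi$-component together with its $x$-component yields a solution of~\eqref{eq:pel++} in which $\pi$ is supported on $\mathcal{C}$-satisfiable interpretations. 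By Lemma~\ref{lemma:pel++}, $\tuple{\mathcal{C,P}}$ is satisfiable.

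The proof has no real obstacle; the only subtlety is bookkeeping, namely verifying that the cost vector's definition exactly isolates the ``good'' columns (both the interpretation columns whose block equals the $\{0,1\}$-representation of some $\mathcal{I}_j \models \mathcal{C}$ and the $\ell$ auxiliary ones), so that the two sides of the equivalence $c'\pi^x = 0 \Longleftrightarrow$ solvability of~\eqref{eq:pel++} via $\mathcal{C}$-satisfiable columns line up cleanly. Once that bookkeeping is stated, the argument is a one-line consequence of $c,\pi^x \geq 0$ plus Lemma~\ref{lemma:pel++}.
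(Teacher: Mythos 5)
Your proof is correct and takes essentially the same approach as the paper: the paper's own (one-sentence) justification is precisely the observation that, since $c$ and $\pi^x$ are nonnegative, $c'\pi^x=0$ holds iff positive probability mass sits only on zero-cost columns, i.e.\ on the $\ell$ auxiliary columns and on columns representing $\mathcal{C}$-satisfiable interpretations, after which Lemma~\ref{lemma:pel++} yields both directions. Your write-up simply makes the bookkeeping explicit that the paper leaves implicit.
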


Condition $c'\pi=0$ means that only the columns of ${A}^j$ corresponding to $\mathcal{C}$-satisfiable interpretations can be attributed probability $\pi_j>0$, which immediately leads to solution of \eqref{eq:PELPP_prog}. Minimization problem \eqref{eq:PELPP_prog} can be solved by an adaptation of the simplex method with column generation such that the columns of $C$ corresponding to columns of $A$ are generated on the fly. The simplex method is a stepwise method which at each step considers a basis consisting of $k+\ell+1$ columns of matrix $C$ and computes its associated cost~\cite{BT1997}. The processing proceeds by finding a column of $C$ outside the basis, creating a new basis by substituting one of the basis columns by this new column such that the associated cost never increases. To guarantee the cost never increases, the new column $C^j$ to be inserted in the basis has to obey a restriction called reduced cost given by $\tilde{c}_j = c_j-c_{B_a} {B_a}^{-1} C^j \leq 0$, where $c_j$ is the cost of column $C^j$, ${B_a}$ is the basis and $c_{B_a}$ is the cost associated to the basis. Note that in our case, we are only inserting columns that represent $\mathcal{C}$-satisfiable interpretations, so that we only insert columns of matrix $C$ and their associated cost $c_j = 0$. Therefore, every new column $C^j$ to be inserted in the basis has to obey the inequality
\begin{align}\label{eq:PELPP_cost}
c_{B_a} {B_a}^{-1} C^j \geq 0.
\end{align}
Note that the first $k$ positions in $C^j$ are 0 and the last one is always 1. 

A column $C^j$ representing a $\mathcal{C}$-satisfying interpretation may or may not satisfy condition \eqref{eq:PELPP_cost}. We call an interpretation that does satisfy \eqref{eq:PELPP_cost} as \emph{cost reducing interpretation}. Our strategy for column generation is given by finding cost reducing interpretations for a given basis.

\begin{lemma}\label{thm:interp_cost}
	There exists an algorithm that decides the existence of cost reducing interpretations whose complexity is in NP.
\end{lemma}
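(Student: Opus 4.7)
The plan is to describe a guess-and-check NP procedure. Since a column $C^j$ representing a $\mathcal{C}$-satisfiable interpretation is forced to have its first $k$ entries equal to $0$ and its last entry equal to $1$, the only free data in $C^j$ is the middle block $v = A^j \in \{0,1\}^\ell$ recording which of the $\ell$ axioms appearing in $\mathcal{P}$ are satisfied by the candidate interpretation. Writing $y = c_{B_a} B_a^{-1}$ and partitioning $y = (y_1, y_2, y_3)$ in accordance with the $k + \ell + 1$ block structure of $C^j$, condition~\eqref{eq:PELPP_cost} collapses to the single linear test $y_2 \cdot v + y_3 \geq 0$ on the $\ell$-bit vector $v$. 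The nondeterministic witness is exactly this vector $v$.

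The verifier then performs three polynomial-time tasks: (i) compute $y$ from $c_{B_a}$ and $B_a$ and check $y_2 \cdot v + y_3 \geq 0$; (ii) check that the augmented CBox $\mathcal{C}^+ = \mathcal{C} \cup \{C_i \dlto D_i : v_i = 1\}$ is $\ELPP$-consistent using the algorithm of~\cite{BBL2005a}; and (iii) for each $i$ with $v_i = 0$, check that $\mathcal{C}^+ \not\models C_i \dlto D_i$, again via polynomial-time $\ELPP$ subsumption reasoning. All three checks are polynomial in the input, and step (iii) performs only $O(\ell)$ such subsumption queries, so the overall procedure is NP.

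Correctness reduces to the claim that $v$ is \emph{realisable}, in the sense that some $\mathcal{I} \models \mathcal{C}$ satisfies exactly the axioms of $\mathcal{P}$ indexed by the $1$-entries of $v$, if and only if (ii) and (iii) hold. The forward direction is immediate: any realising $\mathcal{I}$ is a model of $\mathcal{C}^+$ and a counter-model for every axiom marked $v_i = 0$. The main obstacle is the converse, because (iii) only gives a separate counter-model for each bad index; one needs a single interpretation simultaneously falsifying all of them. I would close this gap by invoking the canonical-model property of $\ELPP$: the canonical model $\mathcal{I}^\star$ extractable from a completion of $\mathcal{C}^+$ satisfies exactly those GCIs that are entailed by $\mathcal{C}^+$, so it simultaneously satisfies every $C_i \dlto D_i$ with $v_i = 1$ and falsifies every $C_i \dlto D_i$ with $v_i = 0$, hence realises $v$. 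Combining these observations, the guess-and-check algorithm decides the existence of cost-reducing interpretations in NP, as stated.
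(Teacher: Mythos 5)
Your route is genuinely different from the paper's, and it contains a real gap. The paper's proof makes the witness the interpretation itself: guess a (polynomial-size) model, verify by model checking that it satisfies $\mathcal{C}$, read off the column, and test inequality~\eqref{eq:PELPP_cost} --- no entailment reasoning is involved. You instead guess only the satisfaction vector $v \in \{0,1\}^\ell$ and certify its realisability by one consistency check plus $\ell$ separate non-entailment checks. Your reduction of~\eqref{eq:PELPP_cost} to $y_2 \cdot v + y_3 \geq 0$ and the forward direction of the realisability claim are fine, but the converse --- precisely the step you close by invoking the ``canonical-model property'' --- fails. That property (a single canonical model satisfying exactly the entailed GCIs over the relevant concepts) is an \EL\ fact; it breaks in \ELPP\ because of nominals. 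Concretely, let $\mathcal{C}^+ = \{\top \dlto \{a\}\}$ and take two PBox axioms $A \dlto B$ and $B \dlto A$ with $v = (0,0)$, so $\mathcal{C}^+ = \mathcal{C}$. Then $\mathcal{C}^+$ is consistent and entails neither $A \dlto B$ nor $B \dlto A$, so your checks (ii) and (iii) succeed. Yet every model of $\top \dlto \{a\}$ has a singleton domain, in which each of $A^\mathcal{I}, B^\mathcal{I}$ is either $\emptyset$ or the whole domain, and one checks that every such model satisfies at least one of the two GCIs. Hence no interpretation realises $v$: your verifier accepts a vector to which no column of $A$ corresponds, so the ``algorithm'' is unsound --- it can report a cost-reducing interpretation that does not exist, and the column handed back to the simplex iteration would not represent any \ELPP-model.

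The underlying phenomenon is that nominals destroy the convexity-like behaviour of \EL: a CBox can entail a disjunction of GCIs without entailing any disjunct, which is exactly what per-axiom non-entailment tests cannot detect, and why a single simultaneous counter-model need not exist. The repair is to follow the paper's scheme and let the nondeterministic witness carry the model rather than just the bit vector: guess a polynomial-size interpretation $\mathcal{I}$, verify $\mathcal{I} \models \mathcal{C}$ and compute each bit $v_i$ as $\mathcal{I} \models C_i \dlto D_i$ by model checking, then test $y_2 \cdot v + y_3 \geq 0$; all of this is polynomial in the size of the guessed structure. Your step (i) and the overall NP accounting can be kept unchanged once the witness is upgraded in this way.
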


\begin{proof}
	Since we are dealing with a CBox in \ELPP, the existence of satisfying interpretations is polynomial-time and thus in NP, we can guess one such equilibrium and in polynomial time both verify it is a $\mathcal{C}$-satisfying interpretation and that is satisfies~\eqref{eq:PELPP_cost}. \qed
\end{proof}

We can actually build a deterministic algorithm for Lemma \ref{thm:interp_cost} by reducing it to a SAT problem. In fact, computing \ELPP\ satisfiability can be encoded in a 3-SAT formula $\phi$; the condition \eqref{eq:PELPP_cost} can also be encoded by a 3-SAT formula $\psi$ in linear time, e.g. by Warners algorithm~\cite{War1998}, such that the SAT problem consisting of deciding $\phi\cup\psi$ is satisfiable if, and only if, there exists a cost reducing interpretation. Furthermore its valuation provides the desired column $C^j$, after prefixing it with $k$ 0's and appending a 1 at its end. This SAT-based algorithm we call the \emph{\ELPP-Column Generation Method}.  In practice, column generation tries \emph{first} to output one of the last $\ell$ columns in $C$; if the insertion of one such column causes $det(B_a)=0$ or $\pi^x \not\geq 0$, or if all the last $\ell$ $C$-columns are in the basis, the proper\ELPP-Column Generation Method is invoked.

\begin{algorithm}
	\caption{PKBSAT-CG: a probabilistic knowledge base solver via Column Generation\label{alg:PCEviaCG}}
	\textbf{Input:} A probabilistic knowledge base $\tuple{\mathcal{C,P}}$ and its associated set of restrictions in format~\eqref{eq:pel1}.
	
	\textbf{Output:} No, if $\tuple{\mathcal{C,P}}$ is unsatisfiable. Or a solution
	$\tuple{B_a,\pi^x}$ that minimizes \eqref{eq:PELPP_prog}.
	
	\begin{algorithmic}[1]
		\STATE $B_a^{(0)} := {I}_{k+\ell+1};$ \label{lin:iniini}
		\STATE $s := 0$, ${\pi^x}^{(s)} = (B_a^{(0)})^{-1} \cdot d$ and $c^{(s)} = [1 \cdots 1]';$ \label{lin:iniend}
		\WHILE{$c^{(s)}{}' \cdot {\pi^x}^{(s)} \neq 0$} 
		\label{lin:loop}
		\STATE $y^{(s)} = \mathit{GenerateColumn}(B_a^{(s)},\mathcal{C},c^{(s)});$ \label{lin:cond}
		\IF{Column generation failed} 
		\RETURN No;~~ \label{lin:fail}  \COMMENT{probabilistic knowledge base is
			unsatisfiable}
		\ELSE
		\STATE $B_a^{(s+1)} = \mathit{merge}(B_a^{(s)}, y^{(s)});$ \label{lin:merge}
		\STATE $s\!\!+\!\!+$, recompute ${\pi^x}^{(s)} := (B_a^{(s-1)})^{-1} \cdot d$; $c^{(s)}$ the costs of $B_a^{(s)}$ columns;
		\ENDIF
		\ENDWHILE\label{lin:endloop}
		\RETURN $\tuple{B_a^{(s)},{\pi^x}^{(s)}}$;~~  \COMMENT{probabilistic knowledge base is satisfiable} \label{lin:end}
	\end{algorithmic}
\end{algorithm}

Algorithm \ref{alg:PCEviaCG} presents the top level probabilistic knowledge base decision procedure. Lines \ref{lin:iniini}--\ref{lin:iniend} present the initialization of the algorithm. We assume the vector $p$ is in descending order. At the initial step we make $B^{(0)} = U_{K+1}$, this forces $\pi^{(0)}_{K+1} = p_{K+1} \geq 0$, $ \pi^{(0)}_{j} = p_{j} -p_{j+1} \geq 0, 1 \leq j \leq K$; and $c^{(0)} = [c_1 \cdots c_{K+1}]'$, where $c_j=0$ if column $j$ in $B^{(0)}$ is an interpretation; otherwise $c_j=1$. Thus the initial state $s=0$ is a feasible solution.

Algorithm \ref{alg:PCEviaCG} main loop covers lines \ref{lin:loop}--\ref{lin:endloop} which contains the column generation strategy at beginning of the loop (line \ref{lin:cond}). If column generation fails the process ends with failure in line \ref{lin:fail}; the correctness of unsatisfiability by failure is guaranteed by Lemma~\ref{thm:PELPP_prog}. Otherwise a column is removed and the generated column is inserted in a process we called \textit{merge} at line \ref{liWe have thus proved the following result.
	n:merge}. The loop ends successfully when the objective function (total cost) $c^{(s)}{}' \cdot {\pi^x}^{(s)}$ reaches zero and the algorithm outputs a probability distribution $\pi^x$ and the set of interpretations columns in $B_a$, at line \ref{lin:end}. 

The procedure \textit{merge} is part of the simplex method which guarantees that given a column $y$ and a feasible solution $\tuple{B_a,\pi^x}$ there always exists a column $j$ in $B_a$ such that if  $B_a[j:=y]$ is obtained from $B_a$ by replacing column $j$ with $y$, then there is $\tilde{\pi^x}\geq 0$ such that $\tuple{B_a[j:=y],\tilde{\pi^x}}$ is a feasible solution.

\subsection{Column Generation Procedure}
\label{sec:colgenproc}

Column generation is based on the cost reduction condition \eqref{eq:PELPP_cost}, which we repeat here:
\begin{align}\label{eq:PELPP_cost2}
c_{B_a} {B_a}^{-1} C^j \geq 0.
\end{align}

Recall that matrix $C$ is of the form 
\[
C = \left[
\begin{array}{c;{3pt/3pt}c}
0 & B \\ \hdashline[3pt/3pt]
A  & -I_\ell \\ \hdashline[3pt/3pt]
\mathbf{1} & 0
\end{array}
\right]
\]

So, column generation first tries to insert a cost decreasing column from the last $\ell$ columns in $C$; this involves verifying if condition~\eqref{eq:PELPP_cost2} holds for any of the $\ell$ rightmost columns, which are known from the start and do not need to be generated.  If one such column is found, it is returned.

If no such column is found, however, \ELPP-Column Generation Method described next is invoked.  As the number of columns of matrix $A$ is potentially exponentially large and thus not stored.  Note that the first $k$ positions in a generated column $C^j$ are all 0 and the last entry is always 1; the remaining $\ell$ positions are a column of matrix $A$ representing an \ELPP-interpretation $\mathcal{I}$; those positions are 0-1 values, where 1 represents $\mathcal{I} \models C_i \dlto D_i$ and 0 representing the existence of some domain element $b$ such that $\mathcal{I} \models C_i(b)$ but $\mathcal{I} \not\models \dlto D_i(b)$, $1 \leq i \leq \ell$.  Thus the elements of a generated $c^j$ are all 0-1, and we identify them with valuations of a satisfying assignment of a SAT formula $\Gamma$ obtained as follows:
\begin{enumerate}
	\item $\Gamma_1$ is obtained by translating the inequality~\eqref{eq:PELPP_cost2} into a set of clauses;  this can be done, for instance, using the procedure described by~\cite{War1998}.
	\item $\Gamma_2$ is a rendering of the \ELPP-decision procedure as a SAT formula for the \ELPP-satisfiability bt some interpretation $\mathcal{I}$ of the given set of axioms on which linear conditions are imposed, $C_1 \dlto D_1, \ldots, C_\ell \dlto D_\ell$.
\end{enumerate}
Formulas $\Gamma_1$ and $\Gamma_2$ share variables indicating whether $\mathcal{I} \models C_i \dlto D_i$, $1 \leq i \leq \ell$.  We take $\Gamma = \Gamma_1 \cup \Gamma_2$, and send it to a SAT solver.  If $\Gamma$ is satisfiable, we obtain from the satisfying valuation a column that is cosat reducing, due to n$\Gamma_1$ and that represents an \ELPP-model, due to $\Gamma_2$.

As the constraints of the sumplex method are thus respected, and it is an always terminating procedure, we have the following result.

\begin{theorem}
	Algorithm \ref{alg:PCEviaCG} decides probabilistic knowledge base satisfiability using column generation.
\end{theorem}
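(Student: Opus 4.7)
The plan is to recognise Algorithm~\ref{alg:PCEviaCG} as an instance of the revised simplex method applied to the linear program~\eqref{eq:PELPP_prog}, where the exponential column set of $C$ is not stored but queried through a pricing oracle. Once this perspective is in place, the theorem splits into three claims: (i) the initialisation yields a feasible starting basis, (ii) each call to column generation is a correct and complete pricing step, and (iii) the procedure terminates. Claim~(i) is immediate: $B_a^{(0)} = I_{k+\ell+1}$ together with ${\pi^x}^{(0)} = d$ and $c^{(0)} = [1\cdots 1]'$ is feasible by construction.

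For claim~(ii), I would observe that since every candidate column $C^j$ has $c_j = 0$, the usual simplex reduced-cost test $c_j - c_{B_a} B_a^{-1} C^j < 0$ becomes precisely the inequality~\eqref{eq:PELPP_cost2}. The oracle first scans the $\ell$ rightmost columns of $C$ explicitly; if none is cost-reducing, it builds $\Gamma = \Gamma_1 \cup \Gamma_2$ where $\Gamma_1$ encodes~\eqref{eq:PELPP_cost2} via Warners' translation and $\Gamma_2$ encodes \ELPP-satisfiability, sharing exactly the $\ell$ propositional variables $v_i$ whose intended meaning is $\mathcal{I} \models C_i \dlto D_i$. A satisfying assignment of $\Gamma$ then decodes (after prefixing $k$ zeros and appending a $1$) to a cost-reducing column representing a $\mathcal{C}$-satisfiable interpretation; and, crucially, UNSAT of $\Gamma$ certifies that no such column exists, so by the simplex optimality criterion the current basis is optimal. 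Combining this with Lemma~\ref{thm:PELPP_prog}, failure at line~\ref{lin:fail} occurs only when the LP optimum is strictly positive, hence the knowledge base is unsatisfiable; and termination at line~\ref{lin:end} with $c^{(s)}{}' \cdot {\pi^x}^{(s)} = 0$ yields a distribution supported only on columns encoding $\mathcal{C}$-models, which by Lemma~\ref{lemma:pel++} witnesses satisfiability.

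For claim~(iii), every column of $C$ is a $\{0,1\}$-vector of fixed length $k+\ell+1$, so there are finitely many possible columns and hence finitely many bases. Adopting Bland's rule (or a lexicographic tie-breaking rule) inside the \emph{merge} step prevents cycling, so the main loop performs finitely many iterations.

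The main obstacle, in my view, is not the simplex mechanics but the faithfulness of the pricing oracle: one has to verify that $\Gamma_1$ correctly Booleanises a linear inequality whose coefficient vector $c_{B_a} B_a^{-1}$ is rational (not $0/1$), that the shared variables $v_i$ make the valuations of $\Gamma_1$ and $\Gamma_2$ refer to the \emph{same} interpretation, and that the special-case treatment of the final $\ell$ columns of $C$ does not skip any cost-reducing candidate. Once these encoding details are pinned down, the correctness and termination of the overall procedure follow from standard simplex theory together with Lemmas~\ref{thm:PELPP_prog} and~\ref{thm:interp_cost}.
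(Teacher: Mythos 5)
Your proof takes essentially the same route as the paper, whose own ``proof'' is just the one-line remark preceding the theorem that the simplex constraints are respected and the procedure always terminates; your three claims --- feasible initialization, SAT-oracle pricing whose UNSAT answer certifies optimality (hence, via Lemma~\ref{thm:PELPP_prog}, unsatisfiability on failure and, with $c'\pi = 0$, satisfiability on success) --- are exactly the implicit content of that remark, spelled out. Your added care is a refinement rather than a divergence: the anti-cycling rule you invoke is genuinely needed for termination (the paper merely asserts it), and your two small slips are harmless --- the last $\ell$ columns of $C$ contain entries of $B$ and $-I_\ell$, so not every column of $C$ is a $\{0,1\}$-vector (finiteness of the column set, and hence of the bases, still holds since only the $A$-part columns vary), and feasibility of ${\pi^x}^{(0)} = d$ under the identity basis requires $d \geq 0$, i.e., the constraint rows normalized so that $q \geq 0$.
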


A detailed example is provided illustrated the procedure details.

\begin{example}\rm
	We now show a step-by-step solution of the satisfiability of the dengue example using Algorithm~\ref{alg:PCEviaCG} and column generation procedure as above.  At each step $s$ we are going to show the basis $B_a^{(s)}$, the basis cost vector $c^{(s)}$, the partial solution $\pi^{(s)}$, the current \textit{cost} $=c^{(s)}{}' \cdot \pi^{(s)}$ and the generated column $y$.
	
	The columns generated correspond to \ELPP-models that have to satisfy the restrictions 
	\begin{align}\label{model}
		\mathrm{Ax}_1 \models_{\ELPP} \mathrm{Ax}_2
	\end{align}
	
	Each row of the basis corresponds to some restriction.  Initially, the basis is the identity matrix, the basis cost vector is all 1's, idicating that all columns do not correspond to any model satisfying~\eqref{model}. 

	\begin{align*}
		\color{gray} c^{(0)}{}'=~ &
		\left[\begin{array}{cccccc}
			1 & 1 & 1 & 1 & 1 & 1
		\end{array}
		\right]&
		{\color{gray} \textit{cost}=1.25}\\
		\color{gray}
		\begin{array}{rr}
			B_a^{(0)}=&P_1 \\ &P_2 \\ &Ax_1 \\ &Ax_2 \\ &Ax_3 \\ &1
		\end{array}
		&
		\left[
		\begin{array}{cccccc}
			1 & 0 & 0 & 0 & 0 & 0 \\
			0 & 1 & 0 & 0 & 0 & 0 \\
			0 & 0 & 1 & 0 & 0 & 0 \\
			0 & 0 & 0 & 1 & 0 & 0 \\
			0 & 0 & 0 & 0 & 1 & 0 \\
			0 & 0 & 0 & 0 & 0 & 1
		\end{array}
		\right]
		\cdot
		\left[
		\begin{array}{c}
			0.05 \\ 0.20 \\ 0 \\ 0 \\ 0 \\ 1
		\end{array}
		\right]
		\hspace*{-4em}
		&
		=
		\left[
		\begin{array}{c}
			0.05 \\ 0.20 \\ 0 \\ 0 \\ 0 \\ 1
		\end{array}
		\right]		
	\end{align*}

	As described above, column generation first tries to insert a cost decreasing column from the last $\ell$ columns in $C$, which are known a priory.  In our case we have the following $B$-equalities and the corresponding columns:
	\begin{align*}
			P(Ax_2) - P(Ax_1) &= 0.05\\
			P(Ax_3) &= 0.20
	\end{align*}
	\[
		C = \left[
		\begin{array}{c;{3pt/3pt}c}
			0 & B \\ \hdashline[3pt/3pt]
			A  & -I_\ell \\ \hdashline[3pt/3pt]
			\mathbf{1} & 0
		\end{array}
		\right]
		~~~~
		\left[
		\begin{array}{c}
			B \\ \hdashline[3pt/3pt]
			-I_\ell \\ \hdashline[3pt/3pt]
			0
		\end{array}
		\right]
		=
		\left[ 
			{
			\begin{array}{r}
			-1 \\ 0 \\ -1 \\ 0 \\ 0 \\ 0
			\end{array}}
			{
			\begin{array}{r}
			1 \\ 0 \\ 0 \\ -1 \\ 0 \\ 0
			\end{array}}
			{
			\begin{array}{r}
			0 \\ 1 \\ 0 \\ 0 \\ -1 \\ 0
			\end{array}}
		\right] 
	\]
	where each corresponds to axioms $\mathrm{Ax}_1$, $\mathrm{Ax}_2$ and  $\mathrm{Ax}_3$, respectively.  It occurs that those columns satisfy the column reduction inequality~\eqref{eq:PELPP_cost2}, and are inserted in the basis in the order $\mathrm{Ax}_3$, $\mathrm{Ax}_2$, $\mathrm{Ax}_3$; also note that the rightmost column does correspond to a model satisfying restriction~\eqref{model}, so after 4 column generation steps we have the following state:

	\begin{align*}
		\color{gray} c^{(4)}{}'=~ &
		\left[\begin{array}{rrrrrr}
			~~~0 & ~~0 & ~~0 & 1 & 1 & 0
		\end{array}
		\right]&
		{\color{gray} \textit{cost}=1.25}\\
		\color{gray}
		\begin{array}{rr}
		B_a^{(4)}=&P_1 \\ &P_2 \\ &Ax_1 \\ &Ax_2 \\ &Ax_3 \\ &1
		\end{array}
		&
		\left[
		\begin{array}{rrrrrr}
		1 & 0 & -1 & 0 & 0 & 0 \\
		0 & 1 & 0 & 0 & 0 & 0 \\
		0 & 0 & -1 & 0 & 0 & 0 \\
		-1 & 0 & 0 & 1 & 0 & 0 \\
		0 & -1 & 0 & 0 & 1 & 0 \\
		0 & 0 & 0 & 0 & 0 & 1
		\end{array}
		\right]
		\cdot
		\left[
		\begin{array}{c}
		0.05 \\ 0.20 \\ 0 \\ 0.05 \\ 0.20 \\ 1
		\end{array}
		\right]
		\hspace*{-4em}
		&
		=
		\left[
		\begin{array}{c}
		0.05 \\ 0.20 \\ 0 \\ 0 \\ 0 \\ 1
		\end{array}
		\right]		
	\end{align*}
	
	Note that the inserted columns now correspont to positions of basis cost vector with value 0.  The choice of which columns leave the basis is performed by the \textit{merge} procedure, which is a linear algebraic method that ensures that $\pi \geq 0$.  Note that total cost has not decreased so far, which is always a possibility as condition~\eqref{eq:PELPP_cost2} only ensures that the coat is non-increasing.  As all the rightmost $B$-columns have already been inserted in the basis, we have to proceed to a proper column generation process in which restriction~\eqref{model} needs to be respected as well as the following inequality:
	\begin{align*}
		c'_{B_a} B_a^{-1} C_j = 
		\left[1~1~-1~1~1~0\right]\cdot
		\left[
		0~ 0 ~ Ax_1 ~ Ax_2 ~ Ax_3 ~ 1
		\right]'&= -Ax_1 +Ax_2+Ax_2\geq 0
	\end{align*}
	We transform the inequality above to a SAT formula, together with a transformation of restriction~\eqref{model} into another SAT formula, and submit to a SAT solver that generates a satisfying valuation indicating that there is an \ELPP-model that satisfies axioms 2 and 3 but not axiom 1, thus generating the column $[0~0~0~1~1~1 ]'$ which the \textit{merge} procedures inserts as the fourth column, thus generating the state:
	\begin{align*}
		\color{gray} c^{(5)}{}'=~ &
		\left[\begin{array}{rrrrrr}
		~~~0 & ~~0 & ~~0 & ~0 & 1 & 0
		\end{array}
		\right]&
		{\color{gray} \textit{cost}=0.15}\\
		\color{gray}
		\begin{array}{rr}
		B_a^{(5)}=&P_1 \\ &P_2 \\ &Ax_1 \\ &Ax_2 \\ &Ax_3 \\ &1
		\end{array}
		&
		\left[
		\begin{array}{rrrrrr}
		1 & 0 & -1 & 0 & 0 & 0 \\
		 0 & 1 & 0 & 0 & 0 & 0 \\
		 0 & 0 & -1 & 0 & 0 & 0 \\
		-1 & 0 & 0 & 1 & 0 & 0 \\
		0 & -1 & 0 & 1 & 1 & 0 \\
		0 &  0 & 0 & 1 & 0 & 1
		\end{array}
		\right]
		\cdot
		\left[
		\begin{array}{c}
		0.05 \\ 0.20 \\ 0 \\ 0.05 \\ 0.15 \\ 0.95
		\end{array}
		\right]
		\hspace*{-4em}
		&
		=
		\left[
		\begin{array}{c}
		0.05 \\ 0.20 \\ 0 \\ 0 \\ 0 \\ 1
		\end{array}
		\right]		
	\end{align*}
	Note that the total cost has decreased for the first time.  The \textit{merge} process chooses a column to leave the basis so as to guarantee that the partial solution $\pi \geq 0$, but it does not ensure that the leaving column is one with non-zero cost.  In fact, it is a coincidence that in this example all columns that left the basis had non-zero cost; on he other hand, it is \emph{by construction} that all entering columns have zero cost.
	
	Finally, we proceed with column generation.  As before, we obtain the inequality
	
	\[ 	c'_{B_a} B_a^{-1} C_j = 
		Ax_1 -Ax_2+Ax_2\geq 0	\\
	\]
	which together with restriction~\eqref{eq:PELPP_cost2} allows for a model in which the three axioms in focus are all true; as before, such a state is obtained by submitting a SAT-encoded formula to a SAT solver.  We obtain the sixth step in the column generation process:
	\begin{align*}
		\color{gray} c^{(6)}{}'=~ &
		\left[\begin{array}{rrrrrr}
		~~~0 & ~~0 & ~~0 & ~0 & 0 & 0
		\end{array}
		\right]&
		{\color{gray}\bf \textit{\bfseries cost}=0}\\
		\color{gray}
		\begin{array}{rr}
		B_a^{(6)}=&P_1 \\ &P_2 \\ &Ax_1 \\ &Ax_2 \\ &Ax_3 \\ &1
		\end{array}
		&
		\left[
		\begin{array}{rrrrrr}
		1 & 0 & -1 & 0 & 0 & 0 \\
		 0 & 1 & 0 & 0 & 0 & 0 \\
		 0 & 0 & -1 & 0 & 1 & 0 \\
		-1 & 0 & 0 & 1 & 1 & 0 \\
		0 & -1 & 0 & 1 & 1 & 0 \\
		0 &  0 & 0 & 1 & 1 & 1
		\end{array}
		\right]
		\cdot
		\left[
		\begin{array}{c}
		0.20 \\ 0.20 \\ 0.15 \\ 0.05 \\ 0.15 \\ 0.80
		\end{array}
		\right]
		\hspace*{-4em}
		&
		=
		\left[
		\begin{array}{c}
		0.05 \\ 0.20 \\ 0 \\ 0 \\ 0 \\ 1
		\end{array}
		\right]		
	\end{align*}
	As the total cost has reached 0, we know the problem is satisfiable.  The last three columns, whose last element is one, correspond to three \ELPP-models on which a probability distribution was obtained, given by the corresponding elements of $\pi^{(6)}$, $0.05, 0.15, 0.80$.  The initial three columns correspond to the $B$-columns and, in the order presented correpond to axioms 2, 3 and 1 and from $\pi^{(6)}$ we can read their probabilities: 0.20, 0.20 and 0.15; note that the initial equations are all respected and the example is finished.
\end{example}

\section{Algorithm for the Probabilistic Extension Problem}
\label{sec:ext}

We now analyse the problem of probabilistic knowledge base extension. Given a satisfiable knowledge base, our aim is to find the maximum and minimum probabilistic constraints for some axiom $C \dlto D$ maintaining satisfiability. Given a precision $\varepsilon=2^{-k}$, the algorithm works by making a binary search through the binary representation of the possible constraints to $C \dlto D$,  solving a probabilistic knowledge base satisfiability problem in each step.

Algorithm \ref{alg:ex} presents a procedure to solve the maximum extension problem. We invoke $\textrm{PKBSAT-CG}(\tuple{\mathcal{C,P}})$ several times in the process. Obtaining the minimum extension is easily adaptable from Algorithm \ref{alg:ex}.

\begin{algorithm}
	\caption{PKBEx-BS: a solver for probabilistic knowledge base extension via Binary Search\label{alg:ex}}
	\textbf{Input:} A satisfiable probabilistic knowledge base $\tuple{\mathcal{C,P}}$, an axiom $C \dlto D$, and a precision $\varepsilon>0$.
	
	\textbf{Output:} Maximum $P(C \dlto D)$ value with precision $\varepsilon$.
	
	\begin{algorithmic}[1]
		\STATE $k := \lceil|\log\varepsilon|\rceil$;
		\STATE $j := 1$, $v_{min} := 0$, $v_{max} := 1$;
		\IF{$\textit{PKBSAT-CG}(\mathcal{C,P}\cup\{P(C \dlto D)=1\}) = \mathrm{Yes}$}
		\STATE $v_{min} := 1$;
		\ELSE
		\WHILE{$j\leq k$}
		\STATE{$v_{max} = v_{min} + \frac{1}{2^j}$};
		\IF{$\textit{PKBSAT-CG}(\mathcal{C,P}\cup\{P(C \dlto D)\geq v_{max}\}) = \mathrm{Yes}$}
		\STATE $v_{min} := v_{max}$;
		\ENDIF
		\STATE $j\!\!+\!\!+$;
		\ENDWHILE
		\ENDIF
		\RETURN $v_{min}$;
	\end{algorithmic}
\end{algorithm}

Suppose the goal is to find the maximum possible value for constraining $C \dlto D$. Iteration 1 solves PKBSAT for $P(C \dlto D)=1$; if it is satisfiable, $\overline{P}(C \dlto D)=1$, else $\overline{P}(C \dlto D)=0$ with precision $2^0$=1, and it can be refined by solving PKBSAT for $P(C \dlto D)=0.5$; if it is satisfiable, $\overline{P}(C \dlto D)=0.5$, else $\overline{P}(C \dlto D)=0$, both cases with precision $2^{-1}=0.5$. One more iteration gives precision $2^{-2}=0.25$, and it consists of solving PKBSAT for $P(C \dlto D)=0.75$ in case the former iteration was satisfiable, otherwise $P(C \dlto D)=0.25$. The proceeds until the desired precision is reached, which takes $|\log 2^{-k}|+1= k+1$ iterations.

\begin{theorem}\label{thm:pkbex}
	Given a precision $\varepsilon>0$, probabilistic knowledge base extension can be obtained with $O(|\log\varepsilon|)$ iterations of  probabilistic knowledge base satisfiability.
\end{theorem}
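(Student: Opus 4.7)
The plan is to show two things about Algorithm \ref{alg:ex}: (i) it is correct in the sense that the returned $v_{min}$ is within $\varepsilon$ of the true maximum $\overline{P}(C \dlto D)$, and (ii) the number of iterations is $O(|\log \varepsilon|)$. Both will follow from a monotonicity property that justifies the binary search, together with the standard halving argument.

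First, I would establish the key monotonicity lemma: the set $S = \{v \in [0,1] \mid \tuple{\mathcal{C}, \mathcal{P} \cup \{P(C \dlto D) \geq v\}} \text{ is satisfiable}\}$ is downward closed, i.e.\ if $v \in S$ and $v' \leq v$, then $v' \in S$. This is immediate because any probability distribution $\pi$ witnessing satisfiability of the threshold constraint $P(C \dlto D) \geq v$ also satisfies $P(C \dlto D) \geq v'$ whenever $v' \leq v$, and all other constraints in $\mathcal{C}$ and $\mathcal{P}$ are unaffected. Consequently $S$ is an interval $[0, \overline{P}(C \dlto D)]$ (or possibly $[0, \overline{P}(C\dlto D))$, but see below), and the maximum $\overline{P}(C \dlto D)$ is well-defined.

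Next I would verify the loop invariant: after iteration $j$, the current $v_{min}$ satisfies $v_{min} \in S$ and $\overline{P}(C \dlto D) < v_{min} + 2^{-j}$. The base case (initialization, or the success of the $P(C \dlto D) = 1$ test) is straightforward. For the inductive step, the algorithm tries the candidate $v_{max} = v_{min} + 2^{-j}$; if the corresponding PKBSAT-CG call succeeds, then $v_{max} \in S$ by construction and we update $v_{min} \leftarrow v_{max}$, halving the remaining uncertainty; otherwise $v_{max} \notin S$, so by the monotonicity lemma $\overline{P}(C \dlto D) < v_{max}$ and the invariant is preserved with the old $v_{min}$. After $k = \lceil|\log\varepsilon|\rceil$ iterations we have $\overline{P}(C \dlto D) - v_{min} < 2^{-k} \leq \varepsilon$, as required.

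The iteration count is now immediate: the loop executes at most $k = \lceil|\log\varepsilon|\rceil$ times, plus the single initial test for $P(C\dlto D)=1$, giving $O(|\log\varepsilon|)$ calls to PKBSAT-CG as claimed. The main subtlety is the edge case where the supremum of $S$ is not attained (so $S$ is half-open); this does not affect the bound, because the algorithm returns a rational approximation and the invariant is stated as a strict inequality on the upper side. Handling the symmetric minimum extension problem is analogous: replace the threshold $P(C \dlto D) \geq v_{max}$ by $P(C \dlto D) \leq v_{max}$, initialize the search from the top, and rely on the upward-closed analogue of the monotonicity lemma. This completes the proof sketch.
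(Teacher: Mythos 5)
Your proposal is correct and follows essentially the same route as the paper: Algorithm~\ref{alg:ex} performs a binary search over threshold values, each of the $k+1 = \lceil|\log\varepsilon|\rceil + 1$ iterations halving the remaining uncertainty via one call to PKBSAT-CG, which is exactly the halving argument the paper sketches informally before the theorem. Your additions --- the downward-closedness of the satisfiable threshold set $S$, the explicit loop invariant $v_{min} \in S$ with $\overline{P}(C \dlto D) < v_{min} + 2^{-j}$, and the remark on a possibly unattained supremum --- merely make rigorous what the paper leaves implicit, and are all sound.
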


\begin{example}\rm
	If we continue he previous examples, by applying Algorithm~\ref{alg:ex}, we obtain that 
	\begin{center}
		\sfl $0.20 \leq P(\exists$suspectOf.Dengue(john)) $ \leq 0.95$.
	\end{center}
	that is, the probability of John having 
	Dengue lies between twenty percent and ninety five percent. Such a high spread means that knowing lower and upper bounds for probability is not really informative. 
\end{example}

It is important to note that this binary search is not the only way to solve the extension problem.  A modification of the column generation procedure is also possible, in which a distinct optimization objective function is used, and in which only models satisfying {\sfl$\exists$suspectOf.Dengue(john)} are generated, could also be used.  We omit the details here.

\section{Conclusions and Further Work}
\label{sec:conc}

In this paper we have extended the logic \ELPP\ with probabilistic reasoning capabilities over GCI axioms, without causing an exponentially-hard complexity blow up in reasoning tasks. We have provided deterministic algorithms based on logic and linear algebra for the problems of probabilistic satisfiability and probabilistic extension, and we have demonstrated that the decision problems are NP-complete.

In the future, we plan to explore more informative probabilistic measures, such as probabilities under minimum entropy distributions and the dealing of conditional probabilities, instead of only focusing on probabilities of $\dlto$-axioms, as was done here. We also plan to study fragments of the logics presented here in the search for tractable fragments of probabilistic description logics.

\bibliographystyle{chicago}
\bibliography{mf}
	
\end{document}